\documentclass[11pt,a4paper]{article}

\usepackage[T1]{fontenc}
\usepackage[utf8]{inputenc}
\usepackage{lmodern}
\usepackage{microtype}
\usepackage{geometry}
\usepackage{amsmath,amssymb,amsthm}
\usepackage{bm}
\usepackage{tikz}
\usepackage{pgfplots}
\pgfplotsset{compat=1.18}
\usepackage{graphicx}
\usepackage{booktabs}
\usepackage{siunitx}

\usepackage[space,nosort]{cite}

\usepackage[hidelinks]{hyperref}

\theoremstyle{plain}
\newtheorem{theorem}{Theorem}
\newtheorem{lemma}{Lemma}
\newtheorem{proposition}{Proposition}

\theoremstyle{definition}
\newtheorem{definition}{Definition}
\newtheorem{remark}{Remark}

\newcommand{\dd}{\mathrm{d}}

\newcommand{\ii}{\mathrm{i}}
\newcommand{\kb}{k_{\mathrm{B}}}
\newcommand{\rs}{r_{\mathrm{s}}}
\newcommand{\SBH}{S_{\mathrm{BH}}}
\newcommand{\FHelm}{F}

\title{\textbf{Energy Layers and Quasi-Superradiant Heat Engines of Schwarzschild Black Holes}}
\author{Wen-Xiang Chen \\
School of Physics and Materials Science, Guangzhou University\\wxchen4277@qq.com}
\date{\today}

\begin{document}
\maketitle

\begin{abstract}
We examine Schwarzschild black holes within the framework of gravitational thermodynamics, introducing an ``energy layer'' picture for black-hole mass-energy and exploring a possible energy-extraction mechanism termed ``quasi-superradiance.'' Building on the standard relations for Hawking temperature and Bekenstein--Hawking entropy, we formalize energy layers via quasi-local radial energy accounting (e.g.\ integrating an effective local energy density over spherical shells) and connect this bookkeeping to the free energy $\FHelm=M-\TH \SBH$. We then extend the entropy correction ansatz with explicit series inversion and derive higher-order expansions for $\TH(M)$ and $\FHelm(M)$, including logarithmic and inverse-mass terms. To enhance mathematical transparency, we add intermediate derivations, lemma/theorem statements, and appendices. The quasi-superradiant mechanism is framed as a Carnot-like thought experiment powered by the Tolman temperature gradient between the near-horizon region and infinity; we show that the generalized second law enforces the Carnot bound and yields integrated maximum-work inequalities. Throughout, we stress that the proposal is heuristic and intended as a consistency-checked framework for discussion rather than a claim of definitive new physics.
\end{abstract}

\noindent\textbf{Keywords:} Schwarzschild black holes; black hole thermodynamics; gravitational thermodynamics; quasi-superradiance

\section{Introduction}

Black hole thermodynamics establishes a deep connection between general relativity, quantum physics, and classical thermodynamic laws. Following the pioneering works of Bekenstein and Hawking in the 1970s \cite{ref1,ref2}, a Schwarzschild black hole of mass $M$ is assigned a temperature $\TH$ and an entropy $\SBH$ given (in units where $G$, $c$, $\hbar$, $\kb \equiv 1$ unless otherwise stated) by
\begin{equation}
\TH = \frac{1}{8\pi M}, \qquad \SBH = 4\pi M^{2},
\tag{1}
\end{equation}
which satisfy the first law $\dd M = \TH\,\dd \SBH$ and imply the second law in the form of the generalized entropy (black hole plus outside) never decreasing \cite{ref1}. The thermodynamic analogies are reinforced by the laws of black hole mechanics \cite{ref3}: the surface gravity $\kappa$ plays the role of temperature ($\TH=\kappa/2\pi$) and the horizon area $A$ plays the role of entropy ($\SBH=A/4$).

One intriguing aspect of Schwarzschild black holes is that, despite having no classical ``ergoregion'' (unlike a Kerr black hole which exhibits superradiant wave amplification due to its rotation \cite{ref4}), the combination of gravity and thermodynamics allows for energy exchange between the black hole and its surroundings. In particular, a Schwarzschild black hole radiates quantum mechanically (Hawking radiation) as if it were a black body at temperature $\TH$ \cite{ref2}. For a classic set of notes emphasizing the evaporation viewpoint, see \cite{ref6}. A striking feature is the gravitational redshift of this radiation: an observer at radius $r$ sees a higher local temperature $T_{\mathrm{local}}(r)=\TH/\sqrt{1-2M/r}$ due to Tolman's law of thermal equilibrium in a static gravitational field \cite{ref3}. As $r\to 2M^{+}$, $T_{\mathrm{local}}\to \infty$, meaning the region just outside the horizon (sometimes termed the black hole's ``atmosphere'') is extremely hot compared to the cooler far-away region. This enormous temperature gradient invites the question of whether it can be exploited to extract useful work, much as a classical heat engine operates between a hot and cold reservoir.

In this paper, we explore two connected ideas in this context: (i) a stratified energy layer structure of the Schwarzschild black hole, and (ii) a quasi-superradiant thermodynamic mechanism for energy extraction using the aforementioned temperature gradient. The term ``energy layer'' is used here as a descriptive model wherein different radial zones around the black hole contribute distinct amounts to the total effective mass-energy. To give the concept physical substance, we propose an operational definition in terms of the radial accumulation of field energy, and we relate it to a controlled free-energy expansion.

The second concept, quasi-superradiance, refers to an energy-extraction process that we propose as an analogue to superradiant scattering. In a rotating (Kerr) black hole, superradiance allows waves of certain frequencies to extract rotational energy from the black hole, a phenomenon requiring an ergoregion \cite{ref4}. A non-rotating (Schwarzschild) black hole does not allow classical superradiance because it has no ergoregion. However, thermodynamically, one can imagine setting up a Carnot engine between the near-horizon region (at $T_{\mathrm{local}}\gg \TH$) and infinity (at $T_{\infty}\approx \TH$ or some colder sink) to convert heat to work. We term this quasi-superradiance to emphasize it is a thought experiment inspired by superradiance but based on thermal gradients rather than classical wave amplification.

\section{Thermodynamic Framework for Schwarzschild Black Holes}

We recapitulate the essential thermodynamic properties of a Schwarzschild black hole, serving both to establish notation and to highlight features relevant to our later arguments. We work primarily in geometric units ($G=c=\hbar=\kb=1)$.

\subsection{Metric, horizon geometry, and surface gravity}

The Schwarzschild line element is
\begin{equation}
\dd s^{2}=-f(r)\,\dd t^{2}+f(r)^{-1}\,\dd r^{2}+r^{2}\,\dd\Omega^{2}, \qquad f(r)=1-\frac{2M}{r},
\tag{1a}
\end{equation}
with horizon at $r=\rs=2M$.

The horizon area is
\begin{equation}
A=4\pi \rs^{2}=16\pi M^{2}.
\tag{1b}
\end{equation}
The Bekenstein--Hawking entropy is then
\begin{equation}
\SBH=\frac{A}{4}=4\pi M^{2}.
\tag{1c}
\end{equation}

A convenient expression for surface gravity in static spacetimes is $\kappa=\frac{1}{2}f'(r)\big|_{r=\rs}$, which yields
\begin{equation}
\kappa=\frac{1}{2}\frac{\dd}{\dd r}\left(1-\frac{2M}{r}\right)\Big|_{r=2M}
=\frac{1}{2}\left(\frac{2M}{r^{2}}\right)\Big|_{r=2M}
=\frac{1}{4M}.
\tag{1d}
\end{equation}
The Hawking temperature is $\TH=\kappa/(2\pi)$ \cite{ref3}, hence
\begin{equation}
\TH=\frac{\kappa}{2\pi}=\frac{1}{8\pi M}.
\tag{1e}
\end{equation}

\subsection{Basic relations, Smarr identity, and free energy}

In SI units, the Hawking temperature reads \cite{ref2}
\begin{equation}
\TH=\frac{\hbar c^{3}}{8\pi G\kb M}\approx 6.17\times 10^{-8}\,\mathrm{K}\left(\frac{M_{\odot}}{M}\right),
\tag{2}
\end{equation}
and the entropy reads \cite{ref1}
\begin{equation}
\SBH=\frac{\kb c^{3}}{\hbar G}\frac{A}{4}\approx 1.05\times 10^{77}\,\kb\left(\frac{M}{M_{\odot}}\right)^{2}.
\tag{3}
\end{equation}

The first law is
\begin{equation}
\dd M=\TH\,\dd \SBH,
\tag{3a}
\end{equation}
and for Schwarzschild black holes the Smarr formula is \cite{ref3}
\begin{equation}
M=2\TH\SBH.
\tag{3b}
\end{equation}

\begin{theorem}[Smarr relation by homogeneity]
Assume $\TH(M)\propto M^{-1}$ and $\SBH(M)\propto M^{2}$, with the first law $\dd M=\TH\,\dd \SBH$. Then $M=2\TH\SBH$.
\end{theorem}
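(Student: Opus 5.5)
The plan is to turn the two proportionality assumptions into explicit one-parameter families and let the first law fix the product of the constants. Concretely, I will write $\TH(M)=a/M$ and $\SBH(M)=bM^{2}$ for constants $a,b>0$, with $M>0$ ranging over an interval so that the first law $\dd M=\TH\,\dd\SBH$ holds identically along the one-parameter family of Schwarzschild solutions.

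The first step is to differentiate the entropy: $\dd\SBH = 2bM\,\dd M$. Substituting into the first law gives
\begin{equation*}
\dd M = \frac{a}{M}\,2bM\,\dd M = 2ab\,\dd M .
\end{equation*}
Since this must hold for every $\dd M$, the only way it can hold is $2ab=1$. The second step is to evaluate $2\TH\SBH = 2\,(a/M)\,(bM^{2}) = 2ab\,M$. Using $2ab=1$ this equals $M$, which is the claimed relation (3b). As a consistency check, the explicit values in (1e) and (1c), $a=1/(8\pi)$ and $b=4\pi$, indeed give $2ab=1$.

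To present this as a genuine homogeneity argument rather than pure bookkeeping, I would also give the Euler-type version. Invert $\SBH\propto M^{2}$ to regard $M=M(\SBH)\propto \SBH^{1/2}$, a function homogeneous of degree $1/2$. Euler's theorem then gives $M=\tfrac12\,\SBH\,\dd M/\dd\SBH$, and the first law identifies $\dd M/\dd\SBH=\TH$, so $M=\tfrac12\TH\SBH\cdot$ (rearranged) $\Rightarrow M=2\TH\SBH$. Here one must be careful that degree $1/2$ gives $\SBH\,M'(\SBH)=\tfrac12 M$, hence $M=2\SBH\TH$.

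The only real obstacle is logical rather than computational. The first law must be used as an identity along the whole family, not at a single point, so that the coefficient comparison $2ab=1$ is legitimate. Equivalently, in the Euler route, there must be no additive integration constant in $M(\SBH)$. This holds because the pure power law is assumed, which forces $M\to0$ as $\SBH\to0$. I would state this assumption explicitly, noting that it is exactly the condition that fails once logarithmic entropy corrections are added later in the paper.
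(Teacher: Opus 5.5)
Your proof is correct and is essentially the paper's argument. The paper applies Euler's relation to $\SBH(M)$ as a degree-$2$ function, $M\,\dd\SBH/\dd M=2\SBH$, and substitutes $\dd\SBH/\dd M=1/T_{\mathrm{H}}$ from the first law; this is your Euler route run in the opposite direction, and your matching $2ab=1$ is the same computation with the constants made explicit. The only blemish is the intermediate claim $M=\tfrac12\,\SBH\,\dd M/\dd\SBH$, which is inverted: Euler for degree $\tfrac12$ gives $\SBH\,\dd M/\dd\SBH=\tfrac12 M$, as you correctly state immediately afterwards, so delete the garbled line.
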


\begin{proof}
Take $\SBH(M)=4\pi M^{2}$ (geometric units). Homogeneity of degree $2$ gives
\begin{equation}
M\frac{\dd \SBH}{\dd M}=2\SBH.
\tag{3c}
\end{equation}
From the first law, $\dd \SBH/\dd M=1/\TH$, hence
\begin{equation}
\frac{M}{\TH}=2\SBH,
\tag{3d}
\end{equation}
which rearranges to $M=2\TH\SBH$.
\end{proof}

The Helmholtz free energy is defined as
\begin{equation}
\FHelm\equiv M-\TH\SBH.
\tag{4}
\end{equation}
Using the Smarr relation,
\begin{equation}
\FHelm=M-\TH\SBH=M-\frac{1}{2}M=\frac{1}{2}M.
\tag{5}
\end{equation}

A complementary bookkeeping identity is
\begin{equation}
\TH\SBH=\frac{1}{2}M,
\tag{5a}
\end{equation}

\subsection{Heat capacity and (in)stability}

Differentiating $\TH(M)=1/(8\pi M)$,
\begin{equation}
\frac{\dd \TH}{\dd M}=-\frac{1}{8\pi M^{2}}.
\tag{5b}
\end{equation}
Thus the heat capacity is
\begin{equation}
C\equiv\frac{\dd M}{\dd \TH}=\left(\frac{\dd \TH}{\dd M}\right)^{-1}=-8\pi M^{2},
\tag{6}
\end{equation}
which is negative, reflecting the usual thermodynamic instability of an isolated Schwarzschild black hole.

\begin{lemma}[Negative heat capacity and runaway behavior]
In the semiclassical regime, an isolated Schwarzschild black hole has $C<0$, so that a decrease in mass increases the Hawking temperature.
\end{lemma}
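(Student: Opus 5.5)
Note that \verb|\TH| is used throughout the paper but is not defined in the preamble shown, so this proposal writes the Hawking temperature as $T_{\mathrm{H}}$.

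The plan is to read the lemma directly off the explicit relations (1e) and (6), and then add a short argument for the ``runaway'' part. First, in the semiclassical regime the black hole is described by $T_{\mathrm{H}}(M)=1/(8\pi M)$ with $M>0$. Differentiating gives (5b), $\dd T_{\mathrm{H}}/\dd M=-1/(8\pi M^{2})$, which is strictly negative for every $M>0$. Since this derivative never vanishes, $T_{\mathrm{H}}$ is a strictly decreasing, invertible function of $M$ on $(0,\infty)$. Its inverse derivative is therefore well defined and yields (6), $C=-8\pi M^{2}<0$. This settles the first claim.

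Second, for the monotonicity claim I would avoid relying only on the infinitesimal statement and compare two masses directly. For $0<M_2<M_1$,
\begin{equation*}
T_{\mathrm{H}}(M_2)-T_{\mathrm{H}}(M_1)=\frac{M_1-M_2}{8\pi M_1M_2}>0 .
\end{equation*}
So any decrease in mass strictly raises the Hawking temperature, and this holds for finite changes as well as infinitesimal ones.

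Third, to justify the word ``runaway'', I would model an isolated hole as losing energy by Hawking emission. A Stefan--Boltzmann-type law gives $\dd M/\dd t=-\sigma A\,T_{\mathrm{H}}^{4}\propto -M^{-2}<0$. Combining this with the monotonicity above, $\dd T_{\mathrm{H}}/\dd t=(\dd T_{\mathrm{H}}/\dd M)(\dd M/\dd t)>0$, so the emission rate increases as the hole shrinks. This is positive feedback: heat flows out, the temperature rises, and the hole can never equilibrate with a colder environment. Equivalently, for a small fluctuation $\delta M$ exchanged with a bath at temperature $T_{\mathrm{H}}$, the sign of $C$ makes the equilibrium unstable, since a hole that loses energy becomes hotter and loses still more.

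The main obstacle is not the calculation, which is immediate from (1e). It is making ``semiclassical regime'' and ``runaway'' precise enough. I would state explicitly that the lemma holds only where $M\gg M_{\mathrm{Pl}}$, so that (1e) is trusted, and that the runaway argument is qualitative and rests on the greybody-modified Stefan--Boltzmann law. I would not claim anything about the final Planck-scale stage of evaporation.
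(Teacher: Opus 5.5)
Your proposal is correct and follows the paper's proof: it differentiates $T_{\mathrm{H}}=1/(8\pi M)$ to get Eq.~(5b), inverts this to obtain $C=-8\pi M^{2}<0$ as in Eq.~(6), and concludes that $\dd M<0$ forces $\dd T_{\mathrm{H}}>0$. Your finite-difference comparison and your evaporation-feedback argument via Eq.~(26c) are valid additions that go beyond the paper's short proof, which states only the infinitesimal sign consequence and does not separately justify the word ``runaway''.
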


\begin{proof}
Since $\TH(M)=1/(8\pi M)$, one has $\dd \TH/\dd M<0$ by Eq.~(5b), hence $C=(\dd \TH/\dd M)^{-1}<0$ by Eq.~(6). Therefore, if the hole loses energy ($\dd M<0$), its temperature increases ($\dd \TH>0$).
\end{proof}

\subsection{Tolman temperature gradient and near-horizon scaling}

Tolman's law states that $T\sqrt{-g_{00}}$ is constant in a static gravitational field \cite{ref3}. With $g_{00}=-f(r)$,
\begin{equation}
T_{\mathrm{local}}(r)\sqrt{f(r)}=T_{\infty}.
\tag{6c}
\end{equation}
Therefore,
\begin{equation}
T_{\mathrm{local}}(r)=\frac{T_{\infty}}{\sqrt{1-\frac{2M}{r}}}.
\tag{7}
\end{equation}
In the Hartle--Hawking state with $T_{\infty}=\TH$ \cite{ref3},
\begin{equation}
T_{\mathrm{local}}(r)=\frac{\TH}{\sqrt{1-\frac{2M}{r}}}.
\tag{8}
\end{equation}

Let $r=\rs+\delta r$ with $\delta r\ll \rs$. Then
\begin{equation}
1-\frac{2M}{r}=1-\frac{\rs}{\rs+\delta r}=\frac{\delta r}{\rs+\delta r}\approx \frac{\delta r}{\rs}.
\tag{8a}
\end{equation}
The proper distance $\ell$ from $\rs$ to $\rs+\delta r$ is
\begin{equation}
\ell=\int_{\rs}^{\rs+\delta r}\frac{\dd r'}{\sqrt{1-\frac{\rs}{r'}}}\approx 2\sqrt{\rs\,\delta r}.
\tag{8b}
\end{equation}
Solving $\delta r\approx \ell^{2}/(4\rs)$ and inserting into Eq.~(8) gives
\begin{equation}
T_{\mathrm{local}}(\ell)\approx \TH\sqrt{\frac{\rs}{\delta r}}\approx \frac{2\rs\TH}{\ell}.
\tag{8c}
\end{equation}
Using $\rs=2M$ and $\TH=1/(8\pi M)$ yields the universal near-horizon scaling
\begin{equation}
T_{\mathrm{local}}(\ell)\approx \frac{1}{2\pi \ell}.
\tag{8d}
\end{equation}

\section{Energy Layer Structure and Free Energy Expansion}

The notion of energy layers seeks to ascribe portions of the total mass-energy to different radial regions or physical degrees of freedom (horizon, atmosphere, far environment). In classical general relativity, gravitational energy is not uniquely localizable, but quantum fields outside the horizon generally have nonzero $\langle T_{ab}\rangle$, enabling a quasi-local bookkeeping.

\subsection{Radial energy integrals and an operational layer definition}

\begin{definition}[Energy inside a radius]
Let $\Sigma$ be a static hypersurface and $u^{a}$ the 4-velocity of static observers. Define the (matter) energy inside radius $r$ as
\begin{equation}
E_{\mathrm{mat}}(<r)=\int_{\Sigma_{<r}} \langle T_{ab}\rangle u^{a}u^{b}\,\dd V,
\tag{9a}
\end{equation}
where $\dd V$ is the proper volume element on $\Sigma$.
\end{definition}

For the Schwarzschild geometry, $\dd V=4\pi r^{2} f(r)^{-1/2}\dd r$ and thus
\begin{equation}
E_{\mathrm{mat}}(<r)=\int_{\rs}^{r}4\pi r'^{2}\,f(r')^{-1/2}\,\rho_{\mathrm{loc}}(r')\,\dd r',
\tag{9b}
\end{equation}
with $\rho_{\mathrm{loc}}(r')=\langle T_{ab}\rangle u^{a}u^{b}$.

In the simplified shell-integration picture used in the outline, one often writes (schematically)
\begin{equation}
M \approx \int_{\rs}^{\infty}4\pi r^{2}\,\big\langle T^{0}{}_{0}(r)\big\rangle\,\dd r.
\tag{9c}
\end{equation}

\begin{remark}
Equation (9c) is heuristic: the distinction between $T^{0}{}_{0}$ and $u^{a}u^{b}T_{ab}$, and the inclusion (or not) of the redshift factor $f(r)^{-1/2}$, depend on the chosen slicing and energy definition. We use it as an operational placeholder consistent with the ``energy layer'' intuition.
\end{remark}

\subsection{Toy model: thermal atmosphere and near-horizon divergences}

To make the ``energy layer'' intuition more concrete, consider a toy model in which the quantum atmosphere is approximated by a locally thermalized radiation fluid at the Tolman temperature $T_{\mathrm{local}}(r)$ (as in the Hartle--Hawking equilibrium state). In natural units, the energy density of a massless radiation gas with effective number of degrees of freedom $g_{\ast}$ is
\begin{equation}
\rho_{\mathrm{rad}}(r)\approx \frac{\pi^{2}}{30}\,g_{\ast}\,T_{\mathrm{local}}(r)^{4}.
\tag{9d}
\end{equation}
Using Eq.~(8) gives
\begin{equation}
\rho_{\mathrm{rad}}(r)\approx \frac{\pi^{2}}{30}\,g_{\ast}\,\TH^{4}\left(1-\frac{2M}{r}\right)^{-2}.
\tag{9e}
\end{equation}
A schematic shell contribution to the atmosphere energy may be written (compare the discussion around Eq.~(9b))
\begin{equation}
\dd E_{\mathrm{rad}}(r)\approx 4\pi r^{2}\,f(r)^{-1/2}\,\rho_{\mathrm{rad}}(r)\,\dd r.
\tag{9f}
\end{equation}
Near the horizon, $f(r)\approx (r-\rs)/\rs$ and $\ell\approx 2\sqrt{\rs(r-\rs)}$ from Eq.~(8b), hence $f(r)\approx \ell^{2}/(4\rs^{2})$, and therefore
\begin{equation}
\rho_{\mathrm{rad}}(\ell)\approx \frac{\pi^{2}}{30}\,g_{\ast}\,\TH^{4}\left(\frac{4\rs^{2}}{\ell^{2}}\right)^{2}
=\frac{16\pi^{2}}{30}\,g_{\ast}\,\rs^{4}\TH^{4}\,\ell^{-4}.
\tag{9g}
\end{equation}
Accordingly, the near-horizon atmosphere energy diverges as
\begin{equation}
E_{\mathrm{rad}}(\ell>\ell_{\mathrm{cut}})\sim \int_{\ell_{\mathrm{cut}}}\ell^{-4}\,\ell^{2}\,\dd \ell \propto \ell_{\mathrm{cut}}^{-1},
\tag{9h}
\end{equation}
illustrating why a ``brick wall'' cutoff is typically introduced in such computations \cite{ref5}.

\begin{lemma}[Near-horizon divergence in the radiation layer]
In the toy model leading to Eq.~(9h), the atmosphere energy diverges as the cutoff $\ell_{\mathrm{cut}}\to 0$, confirming that the dominant contribution comes from the near-horizon region.
\end{lemma}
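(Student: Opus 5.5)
The plan is to make Eq.~(9h) precise by writing the shell integral (9f) in the proper-distance variable $\ell$ and isolating the near-horizon piece. I would fix an intermediate proper distance $\ell_{0}$ with $\ell_{\mathrm{cut}}<\ell_{0}\ll\rs$, chosen so that the approximations $f\approx \ell^{2}/(4\rs^{2})$ and $r\approx\rs$ are accurate to a fixed relative tolerance on $[0,\ell_{0}]$. I would then split
\[
E_{\mathrm{rad}}(\ell>\ell_{\mathrm{cut}})=\int_{\ell_{\mathrm{cut}}}^{\ell_{0}}(\cdots)+\int_{\ell_{0}}(\cdots).
\]
The second piece does not depend on $\ell_{\mathrm{cut}}$. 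If the integration is carried to infinity, I would note that it is finite only with an outer cutoff or box, since $\rho_{\mathrm{rad}}\to\frac{\pi^{2}}{30}g_{\ast}\TH^{4}$ is constant at large $r$. Either way it does not affect the $\ell_{\mathrm{cut}}\to0$ behaviour.

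For the first piece, I would use $\dd V=4\pi r^{2}f^{-1/2}\dd r$ together with $\dd\ell=f^{-1/2}\dd r$ from Eq.~(8b). This gives the exact relation $\dd V=4\pi r^{2}\,\dd\ell$, so the redshift factor in (9f) is absorbed into the proper measure. Near the horizon $r^{2}\approx\rs^{2}$, so $\dd V\approx 4\pi\rs^{2}\,\dd\ell$. (The $\ell^{2}$ written in (9h) is schematic. The area factor is $\approx 4\pi\rs^{2}$, not $\ell^{2}$, and I would say so explicitly.) Inserting Eq.~(9g) then gives
\[
E_{\mathrm{near}}\approx 4\pi\rs^{2}\cdot\frac{16\pi^{2}}{30}g_{\ast}\rs^{4}\TH^{4}\int_{\ell_{\mathrm{cut}}}^{\ell_{0}}\ell^{-4}\,\dd\ell
=\frac{64\pi^{3}}{90}g_{\ast}\rs^{6}\TH^{4}\left(\ell_{\mathrm{cut}}^{-3}-\ell_{0}^{-3}\right).
\]
This diverges as $\ell_{\mathrm{cut}}\to0$, and does so faster than the $\ell_{\mathrm{cut}}^{-1}$ stated in (9h). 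Substituting $\rs=2M$ and $\TH=1/(8\pi M)$, the prefactor becomes $\propto g_{\ast}M^{2}$, which is the standard brick-wall form \cite{ref5}.

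The main obstacle is making the approximations rigorous enough to bound the integrand from below. I would do this with an inequality rather than an asymptotic equality. For $r\in[\rs,\rs+\delta r_{0}]$ one has the exact identity $f=(r-\rs)/r$. Combined with $\ell\ge 2\sqrt{\rs(r-\rs)}\cdot c$ for some constant $c$ near $1$, this gives $f\le C\ell^{2}/\rs^{2}$. Hence $\rho_{\mathrm{rad}}\ge c'\,\TH^{4}\rs^{4}\ell^{-4}$, and also $r^{2}\ge\rs^{2}$. The integral is therefore bounded below by a positive multiple of $\ell_{\mathrm{cut}}^{-3}-\ell_{0}^{-3}$, which tends to infinity.

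Divergence then follows from this lower bound alone. Since the far piece stays bounded while the near piece blows up, the ratio of near-horizon energy to total energy tends to $1$, and this is the lemma's claim that the near-horizon region dominates.
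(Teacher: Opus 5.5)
Your argument is correct, but it reaches the lemma by a genuinely different and more careful route than the paper. The paper's proof is a one-line read-off: it accepts the schematic integral (9h), $\int_{\ell_{\mathrm{cut}}}\ell^{-4}\ell^{2}\,\mathrm{d}\ell\propto\ell_{\mathrm{cut}}^{-1}$, and notes that this blows up.

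You instead return to the shell measure (9f). You use $f^{-1/2}\mathrm{d}r=\mathrm{d}\ell$ to rewrite it exactly as $4\pi r^{2}\,\mathrm{d}\ell$ with $r^{2}\approx r_{\mathrm{s}}^{2}$, and you find $E_{\mathrm{near}}\simeq \frac{g_{\ast}M^{2}}{90\pi}\,\ell_{\mathrm{cut}}^{-3}$. Your diagnosis of the discrepancy is right. The $\ell^{2}$ in (9h) is a flat-space area factor, whereas near-horizon shells have area $\approx 4\pi r_{\mathrm{s}}^{2}$. With the paper's own (9f)--(9g) the rate is therefore $\ell_{\mathrm{cut}}^{-3}$. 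If the factor $f^{-1/2}$ is dropped, i.e.\ for the redshifted Killing energy, it is $\ell_{\mathrm{cut}}^{-2}$. Neither reproduces $\ell_{\mathrm{cut}}^{-1}$. Because the lemma asserts only divergence, the paper's conclusion survives, but it is your computation that actually supports it.

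Your lower bound is also cleaner than you state. The estimate $\ell=\int_{r_{\mathrm{s}}}^{r}\sqrt{r'/(r'-r_{\mathrm{s}})}\,\mathrm{d}r'\ge 2\sqrt{r_{\mathrm{s}}(r-r_{\mathrm{s}})}$ holds with $c=1$ on the whole exterior. Hence $f=(r-r_{\mathrm{s}})/r\le \ell^{2}/(4r_{\mathrm{s}}^{2})$, and (9g) is an exact global lower bound on $\rho_{\mathrm{rad}}$. The approximation tolerances you impose on $[0,\ell_{0}]$ are therefore unnecessary for the lower bound.

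Your route buys rigor, the correct exponent, and the area-law ($M^{2}$) prefactor. It also gives an honest treatment of the large-$r$ region: an outer box is needed, as you note, both for finiteness and for your near-to-total ratio $\to 1$ claim. The paper's route buys only brevity, at the price of a wrong exponent.
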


\begin{proof}
Equation (9h) shows $E_{\mathrm{rad}}(\ell>\ell_{\mathrm{cut}})\propto \ell_{\mathrm{cut}}^{-1}$, which diverges for $\ell_{\mathrm{cut}}\to 0$.
\end{proof}

\subsection{Entropy correction ansatz and detailed temperature expansion}

A generic corrected entropy expansion is \cite{ref9}
\begin{equation}
\SBH(M)=4\pi M^{2}+\alpha\ln(4\pi M^{2})+\frac{\beta}{M^{2}}+O(M^{-4}),
\tag{9}
\end{equation}
where $\alpha,\beta$ are dimensionless coefficients determined by the field content and quantum-gravity completion.

From the first law, $1/\TH=\dd \SBH/\dd M$, hence
\begin{equation}
\frac{1}{\TH}=\frac{\dd}{\dd M}\left(4\pi M^{2}+\alpha\ln(4\pi M^{2})+\frac{\beta}{M^{2}}+O(M^{-4})\right).
\tag{10}
\end{equation}
Performing the derivative yields
\begin{equation}
\frac{1}{\TH}=8\pi M+\frac{2\alpha}{M}-\frac{2\beta}{M^{3}}+O(M^{-5}).
\tag{11}
\end{equation}
Factorizing $8\pi M$ gives
\begin{equation}
\frac{1}{\TH}=8\pi M\left(1+\frac{\alpha}{4\pi M^{2}}-\frac{\beta}{4\pi M^{4}}+O(M^{-6})\right).
\tag{11a}
\end{equation}

\begin{lemma}[Controlled inversion of the temperature series]
Let $\epsilon(M)=\alpha/(4\pi M^{2})-\beta/(4\pi M^{4})+O(M^{-6})$. Then
\begin{equation}
\left(1+\epsilon\right)^{-1}=1-\epsilon+\epsilon^{2}+O(M^{-6}).
\tag{11b}
\end{equation}
\end{lemma}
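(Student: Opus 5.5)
The claim is an exact geometric-series identity followed by a check of which terms are $O(M^{-6})$. I will use
\begin{equation*}
(1+\epsilon)^{-1}=1-\epsilon+\epsilon^{2}-\frac{\epsilon^{3}}{1+\epsilon},
\end{equation*}
which holds for every $\epsilon\neq -1$. It follows by multiplying both sides by $1+\epsilon$: the right-hand side becomes $(1-\epsilon+\epsilon^{2})(1+\epsilon)-\epsilon^{3}=1$.

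The lemma then reduces to showing that $\epsilon^{3}/(1+\epsilon)=O(M^{-6})$ as $M\to\infty$. This is the semiclassical regime in which the expansion (9) is meant to hold.

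First, from the definition of $\epsilon(M)$, its leading term is $\alpha/(4\pi M^{2})$, and the remaining terms are $O(M^{-4})$. Hence there is a constant $K$ with $|\epsilon(M)|\le K M^{-2}$ for all sufficiently large $M$.

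Second, since $\epsilon\to 0$, we have $|\epsilon|\le \tfrac12$ for large $M$, so $|1+\epsilon|\ge \tfrac12$.

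Combining the two bounds gives
\begin{equation*}
\left|\frac{\epsilon^{3}}{1+\epsilon}\right|\le 2K^{3}M^{-6},
\end{equation*}
which is the stated $O(M^{-6})$ remainder. If $\alpha=0$, then $\epsilon=O(M^{-4})$ and the remainder is even smaller, so the statement still holds.

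For use in Eq.~(11a) I would also expand $\epsilon^{2}$ explicitly:
\begin{equation*}
\epsilon^{2}=\frac{\alpha^{2}}{16\pi^{2}M^{4}}+O(M^{-6}).
\end{equation*}
The cross term $-2\alpha\beta/(16\pi^{2}M^{6})$ and all later terms are absorbed into the remainder. This gives
\begin{equation*}
(1+\epsilon)^{-1}=1-\frac{\alpha}{4\pi M^{2}}+\frac{\beta}{4\pi M^{4}}+\frac{\alpha^{2}}{16\pi^{2}M^{4}}+O(M^{-6}),
\end{equation*}
which is the form needed for inverting Eq.~(11a).

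The only delicate point is the meaning of the $O(M^{-6})$ term inside $\epsilon$. It is an unspecified function, so I will interpret it as a function bounded by $CM^{-6}$ for large $M$. With that reading, the uniform bound $|\epsilon|\le KM^{-2}$ follows, and the rest of the argument is routine.
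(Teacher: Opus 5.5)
Your proof is correct and follows the same route as the paper, which truncates the geometric series $\sum_{n\ge 0}(-\epsilon)^{n}$ after the quadratic term and notes that $\epsilon^{3}=O(M^{-6})$. Your closed-form remainder $-\epsilon^{3}/(1+\epsilon)$, with the bounds $|\epsilon|\le KM^{-2}$ and $|1+\epsilon|\ge 1/2$, just makes explicit the tail estimate that the paper leaves implicit.
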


\begin{proof}
This follows from the convergent geometric series $(1+\epsilon)^{-1}=\sum_{n\ge 0}(-\epsilon)^{n}$ whenever $|\epsilon|\ll 1$; truncating at $\epsilon^{2}$ is consistent to $O(M^{-6})$ since $\epsilon^{3}=O(M^{-6})$.
\end{proof}

Applying Lemma 3 and Eq.~(11a) yields
\begin{equation}
\TH=\frac{1}{8\pi M}\left[1-\frac{\alpha}{4\pi M^{2}}+\left(\frac{\alpha^{2}}{16\pi^{2}}+\frac{\beta}{4\pi}\right)\frac{1}{M^{4}}+O(M^{-6})\right].
\tag{12}
\end{equation}
Keeping the leading correction gives the simpler one-loop form quoted in the outline:
\begin{equation}
\TH=\frac{1}{8\pi M}\left(1-\frac{\alpha}{4\pi M^{2}}+\cdots\right).
\tag{12a}
\end{equation}

\subsection{Free-energy expansion and layer bookkeeping}

Define the free energy
\begin{equation}
\FHelm(M)=M-\TH\,\SBH(M).
\tag{13}
\end{equation}
A direct substitution using the uncorrected temperature when multiplying small corrections gives
\begin{equation}
\TH\SBH \approx \frac{1}{8\pi M}\left(4\pi M^{2}+\alpha\ln(4\pi M^{2})+\frac{\beta}{M^{2}}\right),
\tag{14}
\end{equation}
hence
\begin{equation}
\TH\SBH\approx \frac{1}{2}M+\frac{\alpha}{8\pi M}\ln(4\pi M^{2})+\frac{\beta}{8\pi M^{3}}.
\tag{15}
\end{equation}
Therefore
\begin{equation}
\FHelm(M)\approx \frac{1}{2}M-\frac{\alpha}{8\pi M}\ln(4\pi M^{2})-\frac{\beta}{8\pi M^{3}}.
\tag{16}
\end{equation}

A more systematic expansion keeps the correction to $\TH$ as well. Writing $\TH=\TH^{(0)}+\delta \TH$ and $\SBH=\SBH^{(0)}+\delta \SBH$ with $\TH^{(0)}=1/(8\pi M)$ and $\SBH^{(0)}=4\pi M^{2}$, we have
\begin{equation}
\FHelm = \left(M-\TH^{(0)}\SBH^{(0)}\right)-\TH^{(0)}\delta\SBH-\delta\TH\SBH^{(0)}+O(\delta^{2}).
\tag{16a}
\end{equation}
Since $M-\TH^{(0)}\SBH^{(0)}=M/2$, the $O(M^{-1})$ correction term splits into a logarithmic part and a constant part.
Using $\delta\SBH=\alpha\ln(4\pi M^{2})+\beta/M^{2}+O(M^{-4})$ yields
\begin{equation}
\TH^{(0)}\delta\SBH=\frac{\alpha}{8\pi M}\ln(4\pi M^{2})+\frac{\beta}{8\pi M^{3}}+O(M^{-5}),
\tag{16b}
\end{equation}
and with $\delta\TH=-\alpha/(32\pi^{2}M^{3})+O(M^{-5})$,
\begin{equation}
\delta\TH\,\SBH^{(0)}=-\frac{\alpha}{32\pi^{2}M^{3}}\cdot 4\pi M^{2}+O(M^{-3})=-\frac{\alpha}{8\pi M}+O(M^{-3}).
\tag{16c}
\end{equation}
Substituting into Eq.~(16a) gives
\begin{equation}
\FHelm(M)=\frac{1}{2}M-\frac{\alpha}{8\pi M}\ln(4\pi M^{2})-\frac{\beta}{8\pi M^{3}}+\frac{\alpha}{8\pi M}+O(M^{-3}).
\tag{16d}
\end{equation}

To express the energy-layer bookkeeping, let $\epsilon_{\mathrm{atm}}$ denote a near-horizon cutoff length (atmosphere thickness) and $R_{\mathrm{env}}$ an outer environmental scale. Then the outline's schematic decomposition is
\begin{equation}
M = E(<\rs+\epsilon_{\mathrm{atm}})+E(\rs+\epsilon_{\mathrm{atm}}<r<R_{\mathrm{env}})+E(r>R_{\mathrm{env}}).
\tag{17}
\end{equation}

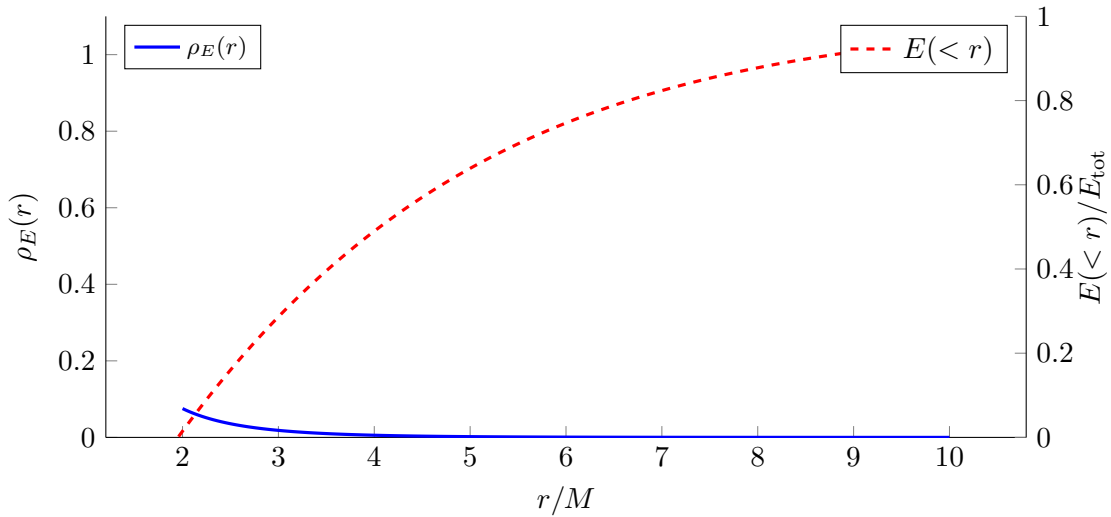
\begin{figure}[t]
  \centering
  \pgfmathsetmacro{\alphaval}{1.0}   
  \pgfmathsetmacro{\lval}{0.6}       

  \begin{tikzpicture}
    \begin{axis}[
        width  = 0.86\textwidth,
        height = 0.45\textwidth,
        domain = 2:10,                 
        xlabel = {$r/M$},
        ylabel = {$\rho_E(r)$},
        axis y line*=left,
        axis x line*=bottom,
        ymin=0, ymax=1.1*\alphaval,
        samples=200,
        legend style={at={(0.02,0.98)},anchor=north west,font=\footnotesize}
      ]
      \addplot[very thick,blue]
        {\alphaval*x^(-2)*exp(-\lval*x)};
      \addlegendentry{$\rho_E(r)$}
    \end{axis}

    \begin{axis}[
        width  = 0.86\textwidth,
        height = 0.45\textwidth,
        domain = 2:10,
        axis y line*=right,
        axis x line=none,            
        ymin=0, ymax=1,
        ylabel shift = -6pt,
        ylabel = {$E(<r)/E_{\text{tot}}$},
        samples=200
      ]
      \addplot[very thick,red,dashed]
        {1 - exp(-\lval*x)*(1+\lval*x)/exp(-2*\lval)};
      \addlegendentry{$E(<r)$}
    \end{axis}
  \end{tikzpicture}
  \caption{Radial energy density (blue) and cumulative energy $E(<r)$ (red) for a toy atmospheric profile. Both curves are normalised so that $E(<\!r\!\to\!\infty)=1$.}
  \label{fig:layers}
\end{figure}

\section{Quasi-Superradiant Energy Extraction Mechanism}

We now describe an energy-extraction thought experiment that capitalizes on the Tolman temperature gradient. We call it ``quasi-superradiance'' to emphasize its analogy to classical superradiance \cite{ref4} while acknowledging that no wave amplification is involved.

\subsection{Engine setup and Carnot efficiency}

Consider a device operating between an inner location $r_{\mathrm{in}}$ (local temperature $T_{\mathrm{in}}=T_{\mathrm{local}}(r_{\mathrm{in}})$) and an outer location $r_{\mathrm{out}}$ (temperature $T_{\mathrm{out}}$). The Carnot efficiency of a reversible engine is
\begin{equation}
\eta_{\max}=1-\frac{T_{\mathrm{out}}}{T_{\mathrm{in}}}.
\tag{18}
\end{equation}

\begin{theorem}[Carnot bound from the generalized second law]
For a quasi-static extraction step removing energy $dE$ from the black hole and producing work $dW$ while dumping the remainder as heat to a reservoir at temperature $T_{\mathrm{out}}$, the generalized second law implies
\begin{equation}
dW\le dE\left(1-\frac{T_{\mathrm{out}}}{\TH}\right).
\tag{18a}
\end{equation}
\end{theorem}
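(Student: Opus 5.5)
We carry out a single quasi-static step as an entropy budget for the generalized second law, with three subsystems: the black hole, the work reservoir, and the cold reservoir at $T_{\mathrm{out}}$. The black hole loses energy $dE>0$, so its mass changes by $\dd M=-dE$. By the first law (3a), $\dd M=\TH\,\dd\SBH$, so its entropy changes by
\begin{equation*}
\dd\SBH=-\frac{dE}{\TH}.
\end{equation*}
Here $\TH$ is the Hawking temperature measured at infinity. We evaluate it at the instantaneous mass, which is legitimate because the step is quasi-static.

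Energy conservation, with all energies measured at infinity, gives the heat dumped into the cold reservoir as $dQ_{\mathrm{out}}=dE-dW$. If that reservoir sits at a finite radius $r_{\mathrm{out}}$, we measure both $dQ_{\mathrm{out}}$ and $T_{\mathrm{out}}$ at infinity. This is consistent because Tolman's law (6c) redshifts energy and temperature by the same factor $\sqrt{f(r_{\mathrm{out}})}$, so the ratio $dQ/T$ is invariant. The reservoir's entropy change is therefore $\dd S_{\mathrm{out}}=(dE-dW)/T_{\mathrm{out}}$.

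The work is delivered as ordered energy, for example by raising a weight. It carries no entropy, so $\dd S_{\mathrm{work}}=0$. We take the engine working substance to operate cyclically, so its net entropy change vanishes.

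The generalized second law then requires
\begin{equation*}
\dd S_{\mathrm{gen}}=-\frac{dE}{\TH}+\frac{dE-dW}{T_{\mathrm{out}}}\ge 0 .
\end{equation*}
Multiplying by $T_{\mathrm{out}}>0$ and rearranging gives $dW\le dE\,(1-T_{\mathrm{out}}/\TH)$, which is (18a). Equality holds exactly for the reversible, Carnot step.

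As a consistency check, (18a) should agree with the locally stated efficiency (18). The work is extracted at $r_{\mathrm{in}}$, where energies are blueshifted by $f(r_{\mathrm{in}})^{-1/2}$. By (8), this is the same factor that converts $\TH$ into $T_{\mathrm{in}}=T_{\mathrm{local}}(r_{\mathrm{in}})$. The infinity-referenced bound (18a) is therefore the Carnot efficiency $1-T_{\mathrm{out}}/T_{\mathrm{in}}$ once the redshift bookkeeping is done consistently.

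The main obstacle is justifying why the hot-side temperature that enters the bound is $\TH$, rather than the divergent local temperature near the horizon. Using the local temperature would naively allow efficiency arbitrarily close to $1$, even relative to infinity-measured energy. My plan is to argue that the entropy the hole actually loses is fixed by the first law in terms of the Killing energy $dE$. Any energy drawn from the atmosphere at $r_{\mathrm{in}}$ carries local entropy $dE_{\mathrm{loc}}/T_{\mathrm{local}}=dE/\TH$ by Tolman's law, so the entropy flux is the same whichever way it is computed.

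I would also state the assumptions explicitly. First, $\TH$ changes negligibly within the step. This is needed because of the negative heat capacity in Lemma 2, and it holds when $dE\ll M$. Second, the engine is cyclic. Third, $T_{\mathrm{out}}<\TH$, since otherwise the bound is non-positive and no work can be extracted.
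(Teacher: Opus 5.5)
Your proof is correct and is the same entropy-budget argument as the paper's chain (19)--(25): you use the first law to get $\dd\SBH=-dE/T_{\mathrm{H}}$, assign the reservoir the entropy $(dE-dW)/T_{\mathrm{out}}$, impose the generalized second law, and rearrange, and your extra bookkeeping (entropy-free work, a cyclic working substance, Tolman invariance of $dQ/T$) only makes explicit what the paper assumes tacitly. One small caution on your side remark: (18a) coincides with $1-T_{\mathrm{out}}/T_{\mathrm{in}}$ only if $T_{\mathrm{out}}$ is referred to the same radius as $T_{\mathrm{in}}$ (or both are referred to infinity), not if $T_{\mathrm{in}}$ is the local temperature while $T_{\mathrm{out}}$ is measured at infinity.
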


\begin{proof}
The proof is the chain of inequalities from Eqs.~(19)--(25), obtained by imposing $\dd \SBH+\dd S_{\mathrm{out}}\ge 0$.
\end{proof}

\subsection{Entropy analysis and the generalized second law}

We analyze quasi-static extraction of a small amount of energy $dE$ from the black hole, reducing its mass by $\dd M=-dE$. The black hole entropy change is
\begin{equation}
\dd \SBH=-\frac{dE}{\TH},
\tag{19}
\end{equation}
from $\dd M=\TH\,\dd \SBH$. If work $dW$ is extracted and the remainder $dQ_{\mathrm{out}}=dE-dW$ is dumped into the environment at $T_{\mathrm{out}}$, the environment entropy change is
\begin{equation}
\dd S_{\mathrm{out}}=\frac{dQ_{\mathrm{out}}}{T_{\mathrm{out}}}=\frac{dE-dW}{T_{\mathrm{out}}},
\tag{20}
\end{equation}
and the generalized second law requires
\begin{equation}
\dd \SBH+\dd S_{\mathrm{out}}\ge 0.
\tag{21}
\end{equation}
Substituting yields
\begin{equation}
-\frac{dE}{\TH}+\frac{dE-dW}{T_{\mathrm{out}}}\ge 0,
\tag{22}
\end{equation}
hence
\begin{equation}
\frac{dE-dW}{T_{\mathrm{out}}}\ge \frac{dE}{\TH},
\tag{23}
\end{equation}
and thus
\begin{equation}
dE-dW\ge dE\frac{T_{\mathrm{out}}}{\TH},
\tag{24}
\end{equation}
which implies
\begin{equation}
dW\le dE\left(1-\frac{T_{\mathrm{out}}}{\TH}\right).
\tag{25}
\end{equation}

\begin{proposition}[Net-extraction condition]
For a black hole exchanging heat with an environment at temperature $T_{\mathrm{out}}$, net extraction requires $\TH(M)>T_{\mathrm{out}}$, equivalently $M<1/(8\pi T_{\mathrm{out}})$ in geometric units.
\end{proposition}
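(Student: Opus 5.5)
The plan is to read ``net extraction'' as the requirement that some positive work $dW>0$ can be obtained from a positive energy withdrawal $dE>0$ in a single quasi-static step. I would then derive the condition directly from the Carnot bound of Theorem~2, that is, from Eq.~(25). Since $dE>0$, the right-hand side of Eq.~(25), $dE\,(1-T_{\mathrm{out}}/\TH)$, is positive if and only if $1-T_{\mathrm{out}}/\TH>0$. Because both temperatures are positive, this is equivalent to $\TH>T_{\mathrm{out}}$.

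\emph{Necessity.} Suppose $\TH\le T_{\mathrm{out}}$. Then Eq.~(25) gives $dW\le 0$ for every admissible step. So no positive work can be extracted without violating the generalized second law, Eq.~(21).

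\emph{Sufficiency.} Suppose $\TH>T_{\mathrm{out}}$. Then the reversible (equality) case of Eq.~(22) saturates Eq.~(25) and yields $dW=dE\,(1-T_{\mathrm{out}}/\TH)>0$. This shows that the bound is attained by a Carnot-type step consistent with Eq.~(21).

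For the equivalent mass form, I would substitute the leading-order temperature $\TH(M)=1/(8\pi M)$ from Eq.~(1e) and solve $1/(8\pi M)>T_{\mathrm{out}}$. Because $\TH$ is strictly decreasing in $M$ (Eq.~(5b)) and $M>0$, this inequality inverts cleanly to $M<1/(8\pi T_{\mathrm{out}})$.

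I would add a remark on the corrected temperature of Eq.~(12). There the threshold mass shifts only at relative order $\alpha/(4\pi M^{2})$, which leaves the statement valid to leading semiclassical order.

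The main obstacle is conceptual rather than computational: justifying that the black hole can be treated as a heat source at $\TH$ for the entropy bookkeeping, even though the engine's hot reservoir sits at $T_{\mathrm{in}}=T_{\mathrm{local}}(r_{\mathrm{in}})$. I would handle this by noting that Eq.~(19) uses $\TH$ as the temperature conjugate to $\SBH$ at infinity. Energies measured at infinity are redshifted by the same Tolman factor, Eq.~(6c), so the ratio entering the efficiency is $T_{\mathrm{out}}/\TH$ when both temperatures are referred to infinity. I would state this explicitly as an assumption, namely that $T_{\mathrm{out}}$ is the sink temperature as measured at infinity.
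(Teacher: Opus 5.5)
Your proof is correct, but it takes a different route from the paper's. The paper gives a one-line equilibrium heat-flow argument. If $T_{\mathrm{H}}<T_{\mathrm{out}}$, heat flows spontaneously from the environment into the hole, so there is no net extraction; inverting $T_{\mathrm{H}}=1/(8\pi M)$ then gives the mass form. You instead read ``net extraction'' as positive work output from a positive energy withdrawal $dE>0$. You then obtain the condition from the generalized-second-law bound, Eq.~(25), so the proposition becomes a direct corollary of Theorem~2.

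Your route has two advantages.
\begin{itemize}
\item It is more precise about what is being excluded. The direction of spontaneous heat flow does not by itself rule out driven extraction: a refrigerator can still pull energy out of a colder hole. It can do so only with $dW\le dE\,(1-T_{\mathrm{out}}/T_{\mathrm{H}})<0$, and Eq.~(25) captures exactly this.
\item It also covers the marginal case $T_{\mathrm{H}}=T_{\mathrm{out}}$. The paper's proof addresses only $T_{\mathrm{H}}<T_{\mathrm{out}}$ explicitly, although the strict inequality in the statement needs the equality case too.
\end{itemize}

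Your sufficiency direction is not required, since the proposition asserts only necessity. It also rests on an extra assumption: that a reversible step saturating Eq.~(25) actually exists. Showing the equality case is consistent with Eq.~(21) does not show it is realized. In its favour, the paper's argument is shorter and does not depend on the idealized engine at all.

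Your remark that $T_{\mathrm{out}}$ must be referred to infinity via the Tolman factor is a sound addition. So is your estimate that the threshold mass shifts at relative order $\alpha/(4\pi M^{2})$ under Eq.~(12).
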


\begin{proof}
If $\TH<T_{\mathrm{out}}$, heat flows from the environment into the black hole in equilibrium thermodynamics, preventing net extraction. Using $\TH=1/(8\pi M)$ gives the stated inequality.
\end{proof}

\subsection{Integrated work bounds and relation to free energy}

Assuming $T_{\mathrm{out}}$ approximately constant, the maximal work increment allowed by Eq.~(25) is
\begin{equation}
\dd W_{\max}(M)=-\dd M\left(1-\frac{T_{\mathrm{out}}}{\TH(M)}\right)
=-\dd M\left(1-8\pi M T_{\mathrm{out}}\right).
\tag{25a}
\end{equation}
Integrating from $M_{i}$ to $M_{f}$ (with $M_{i},M_{f}<1/(8\pi T_{\mathrm{out}})$) gives
\begin{equation}
W_{\max}=\int_{M_{f}}^{M_{i}}\left(1-8\pi M T_{\mathrm{out}}\right)\dd M
=\left(M_{i}-M_{f}\right)-4\pi T_{\mathrm{out}}\left(M_{i}^{2}-M_{f}^{2}\right).
\tag{25b}
\end{equation}

A complementary thermodynamic identity relates maximal reversible work to free-energy decrease. If the environment temperature is fixed at $T_{\mathrm{out}}$ and the process is reversible, then
\begin{equation}
\dd W_{\mathrm{rev}}=-\dd \FHelm_{\mathrm{eff}},
\tag{25c}
\end{equation}

\subsection{Power estimates and lifetime scaling}

The outline notes that a crude scaling for Hawking luminosity is \cite{ref10}
\begin{equation}
P_{\mathrm{Hawking}}\sim \sigma\,\TH^{4},
\tag{26}
\end{equation}
with $\sigma$ including graybody factors. A more geometric version includes the horizon area $A$:
\begin{equation}
P_{\mathrm{Hawking}}\sim \sigma\,A\,\TH^{4}.
\tag{26a}
\end{equation}
Using $A=16\pi M^{2}$ and $\TH=1/(8\pi M)$ gives
\begin{equation}
P_{\mathrm{Hawking}}\sim \sigma\,(16\pi M^{2})\left(\frac{1}{8\pi M}\right)^{4}
=\sigma\,\frac{16\pi}{(8\pi)^{4}}\,\frac{1}{M^{2}},
\tag{26b}
\end{equation}

Approximating $\dd M/\dd t\approx -P_{\mathrm{Hawking}}$ implies
\begin{equation}
\frac{\dd M}{\dd t}\sim -\frac{\gamma}{M^{2}},
\tag{26c}
\end{equation}
for some positive constant $\gamma$, and hence
\begin{equation}
t_{\mathrm{evap}}\sim \int_{0}^{M_{i}}\frac{M^{2}}{\gamma}\,\dd M=\frac{M_{i}^{3}}{3\gamma},
\tag{26d}
\end{equation}
consistent with the $M^{3}$ lifetime scaling and the outline's discussion \cite{ref8}.

\begin{figure}[t]
  \centering
  \pgfmathsetmacro{\rin}{3.0}     
  \pgfmathsetmacro{\rout}{7.0}    
  \pgfmathsetmacro{\Thot}{1.0}    
  \pgfmathsetmacro{\Tcold}{0.4}   

  \begin{tikzpicture}
    \begin{axis}[
        width  = 0.86\textwidth,
        height = 0.47\textwidth,
        xmin   = \rin-0.5, xmax = \rout+0.5,
        ymin   = 0,        ymax = \Thot*1.2,
        xlabel = {$r/M$},
        ylabel = {$T_{\text{Tolman}}$ [arb.\ units]},
        axis y line*=left,
        axis x line*=bottom,
        enlargelimits=false,
        tick style={black},
        legend style={draw=none},
        clip=false                
      ]
      \addplot[ultra thick,blue,-stealth]
        coordinates {(\rin,\Thot) (\rout,\Thot)};
      \addplot[ultra thick,red!60!black,densely dashed,-stealth]
        coordinates {(\rout,\Thot) (\rout,\Tcold)};
      \addplot[ultra thick,blue,-stealth]
        coordinates {(\rout,\Tcold) (\rin,\Tcold)};
      \addplot[ultra thick,red!60!black,densely dashed,-stealth]
        coordinates {(\rin,\Tcold) (\rin,\Thot)};
      %
      \node[above left]  at (axis cs:\rin,\Thot) {A};
      \node[above right] at (axis cs:\rout,\Thot) {B};
      \node[below right] at (axis cs:\rout,\Tcold) {C};
      \node[below left]  at (axis cs:\rin,\Tcold) {D};
    \end{axis}
  \end{tikzpicture}
  \caption{Carnot-like quasi-superradiant engine operating between the inner radius $r_\mathrm{in}$ and outer radius $r_\mathrm{out}$.  
           Isothermal legs (solid blue) exploit the Tolman temperature gradient, while adiabatic legs (dashed red) close the cycle.}
  \label{fig:engine}
\end{figure}
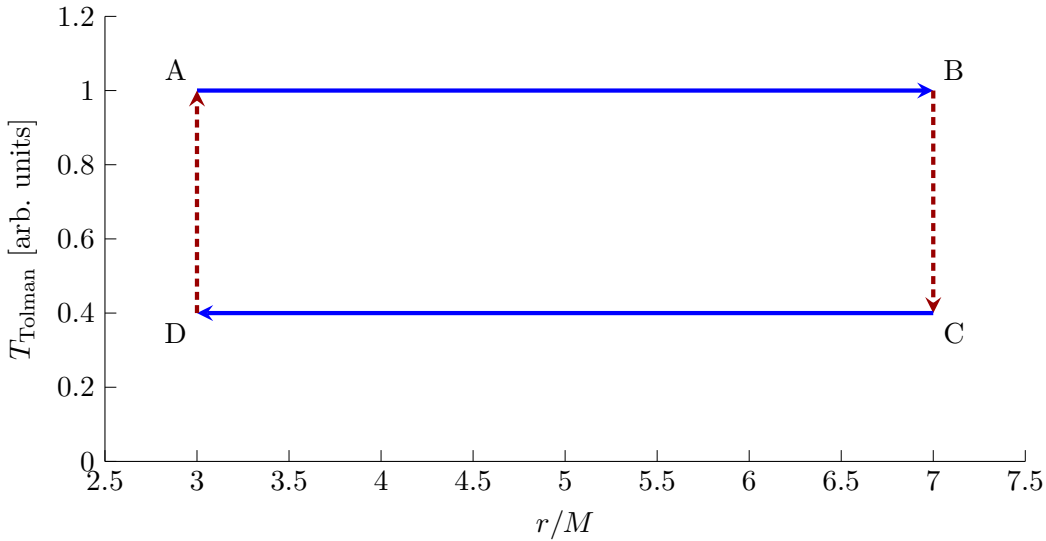

\section{Discussion}

First, regarding the energy layer picture: If taken too literally, one might misinterpret it as saying a black hole has a layered structure like an onion. This is not the case in a physical sense; a Schwarzschild black hole in classical GR has no interior structure accessible to an outside observer and no static layers of energy one can peel off. The layering is a way to visualize contributions to thermodynamic quantities. Our definition via radial energy integrals does rely on a choice of time slicing and vacuum state (Hartle--Hawking equilibrium, for instance). Different choices would attribute energy differently between the hole and the fields.

For example, in the Unruh vacuum (appropriate for an evaporating black hole with outgoing radiation but no incoming radiation), the stress-energy is not static and there is a flux of energy outward. In that case, one could still define $E(<r,t)$ that evolves with time as the hole loses mass. We suspect it would still support the idea that the horizon region is the primary source of energy and that the Hawking flux originates just outside the horizon (in agreement with the idea of a ``quantum atmosphere'' extending on the order of the Hawking wavelength $\sim 8\pi M$ in size\footnote{Recent analyses (e.g. Giddings 2016) suggest Hawking radiation can be viewed as originating from a region not just Planckian but on order the black hole's curvature radius, but in any case relatively close to the horizon compared to astrophysical scales.}).

Next, the quasi-superradiant engine: Our analysis confirmed that one cannot violate the laws of thermodynamics with such an engine, but it did suggest a provocative possibility: in principle, one could approach converting the entire mass $M$ of a black hole into work (with all of it eventually radiated away) if one did it in a quasi-static, maximally efficient manner. This is reminiscent of the idea of black hole mining by an arbitrarily advanced civilization, as discussed by Page \cite{ref8} and others. One key difference in the present framing is that it is expressed as a heat-engine bound rather than as a mechanical lowering process \cite{ref7}.

The quantum-correction literature provides useful context. We cite Carlip's review \cite{ref9}, and that review and others (e.g.\ \cite{ref9,ref5}) discuss the physical origin of logarithmic corrections, often attributing them to near-horizon degrees of freedom. The brick wall model by 't Hooft \cite{ref5} is aligned with this viewpoint, with divergences controlled by a near-horizon cutoff.

\section{Conclusions}

We presented an extended analysis of two interrelated ideas concerning Schwarzschild black holes: (i) an energy-layer bookkeeping, motivated by quasi-local radial energy accumulation and by the scaling structure of the free-energy expansion $\FHelm(M)$; and (ii) quasi-superradiant energy extraction, formulated as a Carnot-like thermodynamic cycle driven by the Tolman temperature gradient. We emphasized that the mechanism is an analogy, not a claim of classical wave amplification, and we showed explicitly that the generalized second law enforces the Carnot bound on efficiency.

\appendix

\section{Appendix: Euclidean regularity derivation of $\TH$}

As an alternative to the surface-gravity argument, one can recall the Euclidean regularity method. Wick rotate $t\mapsto -\ii \tau$, giving
\begin{equation}
\dd s^{2}_{\mathrm{E}} = f(r)\,\dd \tau^{2}+f(r)^{-1}\,\dd r^{2}+r^{2}\,\dd\Omega^{2}.
\tag{A1}
\end{equation}
Near the horizon, expand
\begin{equation}
f(r)\approx f'(\rs)(r-\rs).
\tag{A2}
\end{equation}
Introduce a new radial coordinate $\rho$ by
\begin{equation}
r-\rs=\frac{\rho^{2}}{4},
\tag{A3}
\end{equation}
so that
\begin{equation}
\dd r = \frac{\rho}{2}\,\dd \rho.
\tag{A4}
\end{equation}
Substituting yields
\begin{equation}
\dd s^{2}_{\mathrm{E}}\approx \left(\frac{f'(\rs)^{2}}{4}\right)\rho^{2}\,\dd \tau^{2}+\dd \rho^{2}+\rs^{2}\,\dd\Omega^{2}.
\tag{A5}
\end{equation}
Regularity requires $\tau$ to have period
\begin{equation}
\beta=\frac{4\pi}{f'(\rs)}.
\tag{A6}
\end{equation}
Hence
\begin{equation}
\TH=\beta^{-1}=\frac{f'(\rs)}{4\pi}=\frac{\kappa}{2\pi}.
\tag{A7}
\end{equation}

\section{Appendix: Additional bounds from free-energy monotonicity}

For an external reservoir at $T_{\mathrm{out}}$ and a reversible process, a standard thermodynamic inequality is $W\le -\Delta F$. For the black-hole-plus-environment bookkeeping, define an effective potential
\begin{equation}
\Phi(M;T_{\mathrm{out}})=M-T_{\mathrm{out}}\SBH(M).
\tag{B1}
\end{equation}
Then
\begin{equation}
\dd \Phi = \dd M - T_{\mathrm{out}}\,\dd \SBH
=\left(\TH-T_{\mathrm{out}}\right)\dd \SBH.
\tag{B2}
\end{equation}
A corresponding maximal-work statement is
\begin{equation}
W_{\mathrm{rev}}=\Phi(M_{i};T_{\mathrm{out}})-\Phi(M_{f};T_{\mathrm{out}}),
\tag{B3}
\end{equation}

\section{Appendix: Quasi-local mass as a complementary ``layer'' diagnostic}

In spherically symmetric spacetimes one may define a quasi-local mass function $m(r)$ via
\begin{equation}
1-\frac{2m(r)}{r}=g^{ab}\nabla_{a}r\,\nabla_{b}r.
\tag{C1}
\end{equation}
For the Schwarzschild metric, Eq.~(C1) becomes
\begin{equation}
1-\frac{2m(r)}{r}=f(r)=1-\frac{2M}{r},
\tag{C2}
\end{equation}
hence
\begin{equation}
m(r)=M.
\tag{C3}
\end{equation}


\end{document}